	\newtheorem{assumption}{\textbf{Assumption}}
	\newtheorem{lemma}{\textbf{Lemma}}
	\newtheorem{definition}{\textbf{Definition}}
	\newtheorem{theorem}{\textbf{Theorem}}
	\newtheorem{remark}{\textbf{Remark}}
	\newtheorem{problem}{\textbf{Problem}}
\newcommand{\T}{^{\mbox{\tiny T}}}
\newcommand{\R}{\mathbb{R}}
\newcommand{\C}{\mathbb{C}}
\let\leq\leqslant
\let\geq\geqslant
\newenvironment{proof}[1][Proof]%
{\par\noindent\textit{#1:\ }}%
{\hspace*{\fill} \rule{6pt}{6pt}}
\newenvironment{proof*}[1][Proof]%
{\par\noindent\textit{#1:\ }}{}
\DeclareMathOperator{\diag}{diag}
\DeclareMathOperator{\rank}{rank}
\DeclareMathOperator{\im}{Im}
\newenvironment{system}[1]%
{\setlength{\arraycolsep}{0.5mm}\left\{ \; \begin{array}{#1}}%
	{\end{array} \right.}
\newenvironment{system*}[1]%
{\setlength{\arraycolsep}{0.5mm} \begin{array}{#1}}%
	{\end{array}}
\begin{document}

		\title{State Synchronization of Discrete-time Multi-agent Systems in Presence of Unknown Nonuniform Communication Delays: A Scale-free Protocol Design}

		\author{Zhenwei Liu\aref{ZW},
			Donya Nojavanzadeh\aref{WSU},
			Ali Saberi\aref{WSU},
			Anton A. Stoorvogel\aref{UT}}
		
		\affiliation[ZW]{College of Information Science and Engineering, Northeastern University, Shenyang, P.~R.~China
			\email{liuzhenwei@ise.neu.edu.cn}}
		\affiliation[WSU]{School of Electrical Engineering and Computer Science, Washington State University, Pullman,~WA,~USA
			\email{donya.nojavanzadeh@wsu.edu; saberi@wsu.edu}}
			\affiliation[UT]{Department of Electrical Engineering, Mathematics
			and Computer Science, University of Twente,~Enschede,~The Netherlands
			\email{a.a.stoorvogel@utwente.nl}}

		\maketitle
		
				\begin{abstract}
			In this paper we study scale-free state synchronization of discrete-time homogeneous multi-agent systems (MAS) subject to unknown, nonuniform and arbitrarily large communication delays. The scale-free protocol utilizes localized information exchange and is designed solely based on the knowledge of agents' model and does not require any information about the communication network and the size of the network (i.e. number of agents).
		\end{abstract}

		\keywords{Discrete-time multi-agent systems, Synchronization, Scale-free collaborative protocols, Unknown nonuniform and arbitrarily large communication delays}
		
		\footnotetext{This work is supported by the Nature Science Foundation of Liaoning Province, PR China under Grant 2019-MS-116 and the Fundamental Research Funds for the Central Universities of China under Grant N2004014 and the United States National Science Foundation under Grant 1635184.}
		
%
%
%
%

\section{Introduction}

Cooperative control of multi-agent systems (MAS) has become a hot topic among researchers because of its broad application in various areas such as biological systems, sensor networks, automotive vehicle control, robotic cooperation teams and so on. See for example books \cite{ren-book,wu-book,kocarev-book,bullobook}. The objective is to secure an asymptotic agreement on common states (i.e., state synchronization) or output trajectories (output synchronization) through distributed control protocols. It is worthwhile to note that state synchronization inherently requires homogeneous MAS.

In practical applications, the network dynamics are not perfect and may be
subject to delays. Time delays may afflict system performance or even
lead to instability. As discussed in
\cite{cao-yu-ren-chen}, two kinds of delays have been considered in the
literature: input delays and communication delays. Input delays encapsulate the
processing time to execute an input for each agent, whereas
communication delays can be considered as the time it takes to transmit
information from an origin agent to its destination. It is worthwhile to point out that packet drops in exchanging information can be considered as special case of communication delay, because re-sending packets after they were dropped can be easily done but just having time delay in the data transmission channels. Some researches have been done for networks subject to communication delays. Fundamentally, there are two approaches in the literature for dealing with MAS subject to communication delays.
\begin{enumerate}
	\item Standard state/output synchronization subject to regulating output to a constant trajectory.
	\item Delayed state/output synchronization.
\end{enumerate}

Both of these approaches preserves diffusiveness of the couplings (i.e. ensuring the invariance of the consensus manifold). An interesting line of research utilizing delayed synchronization formulation was introduced recently in \cite{Liu-Saberi-Stoorvogel-Li_delayed-con,Liu-Saberi-Stoorvogel-Li_delayed-dis}. These papers considered a \textit{dynamic} synchronized trajectory (i.e. any non constant synchronized trajectory). They designed protocols to achieve regulated delayed state/output synchronization in presence of communication delays where the communication graph was a directed spanning tree. On the other hand, majority of research on MAS subject to communication delay have been focused on achieving the standard output synchronization by regulating the output to \textit{constant} trajectory (see \cite{cao-yu-ren-chen,tian-liu,xiao-wang-tac,zhang-saberi-stoorvogel-ejc} and references therein). In all of the aforementioned papers, design of protocols require knowledge of the graph and size of the network. Also, we should point out that \cite{munz-papachristodoulou-allgower, munz-papachristodoulou-allgower2} give the consensus conditions for networks with higher-order but SISO dynamics. Moreover \cite{lin-jia-auto} considers second-order dynamics, but the communication delays are assumed to be known.


The main contribution of this paper is designing scale-free collaborative protocols for discrete-time homogeneous MAS subject to  communication delays such that:
\renewcommand\labelitemi{{\boldmath$\bullet$}}
\begin{itemize}
	\item State synchronization is achieved by regulating the outputs of the agents to constant trajectories. The sufficient solvability condition is provided for any arbitrary constant reference trajectory, while necessary and sufficient solvability conditions are established by restricting the constant reference trajectory to a set defined by agent models.  
	\item The scale-free protocol design is independent of information about the communication network or the size of the network.
	\item The proposed collaborative dynamic protocols can tolerate any unknown, nonuniform and arbitrarily large communication delays.  
\end{itemize}

\subsection*{Notations and preliminaries}

 We denote the set of non-negative integers by $\mathbb{Z}_{\geq 0}=\{x\in\mathbb{Z}|x\geq 0\}$. Given a matrix $A\in \mathbb{R}^{n\times m}$, $A\T$ denotes the transpose of $A$. Let $\text{\bf j}$ indicate $\sqrt{-1}$. A square matrix $A$ is said to be Schur stable if all its eigenvalues are inside the unit circle. We denote by
$\diag\{A_1,\ldots, A_N \}$, a block-diagonal matrix with
$A_1,\ldots,A_N$ as its diagonal elements.  $I_n$ denotes the
$n$-dimensional identity matrix and $0_n$ denotes $n\times n$ zero
matrix; sometimes we drop the subscript if the dimension is clear from
the context. For $\bar{A}\in \mathbb{C}^{n\times m}$ and $\bar{B}\in \mathbb{C}^{p\times q}$, the Kronecker product of $\bar{A}$ and $\bar{B}$ is defined as
\[
\bar{A}\otimes \bar{B}=\begin{pmatrix}
\bar{a}_{11}\bar{B}& \hdots& \bar{a}_{1m}\bar{B}\\
\vdots&\ddots&\vdots\\
\bar{a}_{n1}\bar{B}&\hdots&\bar{a}_{nm}\bar{B}
\end{pmatrix}
\]
where $[\bar{A}]_{ij}=\bar{a}_{ij}$. The following properties of the Kronecker product will be particularly useful.
\[
\begin{system*}{cll}
&(A\otimes B)(C\otimes D)=(AC)\otimes (BD),\\
&(\bar{A}\otimes\bar{B})\T=\bar{A}\T\otimes\bar{B}\T,\\
&\bar{A}\otimes(\bar{B}+\bar{C})=\bar{A}\otimes\bar{B}+\bar{A}\otimes\bar{C}.
\end{system*}
\]
Moreover, if $\bar{A}$ and $\bar{B}$ are nonsingular matrices, then
\[
(\bar{A}\otimes\bar{B})^{-1}=\bar{A}^{-1}\otimes\bar{B}^{-1}.
\]

To describe the information flow among the agents we associate a \emph{weighted graph} $\mathcal{G}$ to the communication network. The weighted graph $\mathcal{G}$ is defined by a triple
$(\mathcal{V}, \mathcal{E}, \mathcal{A})$ where
$\mathcal{V}=\{1,\ldots, N\}$ is a node set, $\mathcal{E}$ is a set of
pairs of nodes indicating connections among nodes, and
$\mathcal{A}=[a_{ij}]\in \mathbb{R}^{N\times N}$ is the weighted adjacency matrix with non negative elements $a_{ij}$. Each pair in $\mathcal{E}$ is called an \emph{edge}, where
$a_{ij}>0$ denotes an edge $(j,i)\in \mathcal{E}$ from node $j$ to
node $i$ with weight $a_{ij}$. Moreover, $a_{ij}=0$ if there is no
edge from node $j$ to node $i$. We assume there are no self-loops,
i.e.\ we have $a_{ii}=0$. A \emph{path} from node $i_1$ to $i_k$ is a
sequence of nodes $\{i_1,\ldots, i_k\}$ such that
$(i_j, i_{j+1})\in \mathcal{E}$ for $j=1,\ldots, k-1$. A \emph{directed tree} is a subgraph (subset
of nodes and edges) in which every node has exactly one parent node except for one node, called the \emph{root}, which has no parent node. The \emph{root set} is the set of root nodes. A \emph{directed spanning tree} is a subgraph which is
a directed tree containing all the nodes of the original graph. If a directed spanning tree exists, the root has a directed path to every other node in the tree.  

For a weighted graph $\mathcal{G}$, the matrix
$L=[\ell_{ij}]$ with
\[
\ell_{ij}=
\begin{system}{cl}
\sum_{k=1}^{N} a_{ik}, & i=j,\\
-a_{ij}, & i\neq j,
\end{system}
\]
is called the \emph{Laplacian matrix} associated with the graph
$\mathcal{G}$. The Laplacian matrix $L$ has all its eigenvalues in the
closed right half plane and at least one eigenvalue at zero associated
with right eigenvector $\textbf{1}$ \cite{royle-godsil}. Moreover, if the graph contains a directed spanning tree, the Laplacian matrix $L$ has a single eigenvalue at the origin and all other eigenvalues are located in the open right-half complex plane \cite{ren-book}.


\section{Problem Formulation}
Consider the multi-agent system composed of $N$ identical discrete-time linear agents, 
\begin{equation}\label{agent}
	\Sigma_i : \begin{system}{ccl}
		x_i(k+1) &=& Ax_i(k) +B u_i(k)\\
		y_i(k) &=& Cx_i(k)
	\end{system}
\end{equation}
where $x_i \in \R^{n}$, $y_i \in \R^{p}$, and $u_i \in \R^m$ are the state, output and the input of
agent $i\in \{1,\hdots,N\}$, respectively.

We need the following assumption.
\begin{assumption}\label{agentass2} 
	All eigenvalues of $A$ are in closed unit disc, that is agents are at most weakly unstable.
\end{assumption}
\begin{remark}
	Note that agents, satisfying assumption \ref{agentass2}, can have repeated poles on the unit circle and hence be unstable.
\end{remark}

The network provides agent $i$ with the following information
\begin{equation}\label{zeta}
	\zeta_{i}(k)=\sum_{j=1}^N a_{ij}(y_{i}(k)-y_{j}(k-\kappa_{ij})),
\end{equation}
where $\kappa_{ij} \in \mathbb{Z}_{\geq 0}$ represents an unknown communication delay from agent $j$ to agent $i$. In the above
$a_{ij}\geq 0$ and $a_{ii}=0$. This communication topology of the network, presented in \eqref{zeta}, can be associated to a weighted graph $\mathcal{G}$ with each
node indicating an agent in the network and the weight of an edge is
given by the coefficient $a_{ij}$.  The communication
delay implies that it took $\kappa_{ij}$ seconds for
agent $j$ to transfer its state information to agent $i$.

In terms of the coefficient of the associated Laplacian matrix $L$, $\zeta_i(k)$ can be
represented as
\begin{equation}\label{zeta-l1}
	\zeta_i(k)=\sum_{j=1}^N\ell_{ij}y_j(k-\kappa_{ij})
\end{equation}
where $\kappa_{ii}=0$. 
Obviously, state synchronization is achieved if 
\begin{equation}
	\lim_{k\to \infty}\left(x_i(k)-x_j(k)\right)=0 \quad \text{ for all}\quad i,j\in \{1,\hdots,N\}.
\end{equation}

Our goal is to achieve state synchronization among all agents while the synchronized output dynamic is equal to a constant reference trajectory $y_r$. We assume that a nonempty subset $\mathscr{C}$ of the agents have access to their own output relative to the reference trajectory $y_r\in\mathbb{R}^p$ . In other words, each agent has access to the quantity
\begin{equation}
	\psi_i=\iota_i(y_i- y_r), \qquad \iota_i=\begin{system}{cl}
		1, \quad i\in \mathscr{C},\\
		0, \quad i\notin \mathscr{C}.
	\end{system}
\end{equation}
Therefore, the information available for agent $i\in\{1,\ldots,N\}$, is given by
\begin{equation}\label{zeta-bar1}
	\bar{\zeta}_i(k)=\sum_{j=1}^N a_{ij}(y_{i}(k)-y_{j}(k-\kappa_{ij}))+\iota_i(y_i(k)- y_r).
\end{equation}

From now on, we will refer to the node set $\mathscr{C}$ as root set. For any graph with the Laplacian matrix $L$, we define the expanded Laplacian matrix as
\begin{equation}\label{barL}
	\bar{L}=L+\diag\{\iota_i\}=[\bar{\ell}_{ij}]_{N\times N}
\end{equation}
which is not a regular Laplacian matrix associated to the graph, since the sum of its rows need not be zero. Meanwhile, it should be emphasized that $\bar{\ell}_{ij}=\ell_{ij}$ for $i\neq j$ in $\bar{L}$. We define $\bar{D}$ as
\begin{equation}
\bar{D}=I-(2I+D_{in})^{-1}\bar{L}=[\bar{d}_{ij}]_{N\times N}
\end{equation}
where 
\[
D_{in}=\diag\{d_{in}(i)\}
\]
with $d_{in}(i)=\sum_{j=1}^{N}a_{ij}$. It is easily verified that the matrix $\bar{D}$ is a matrix with all elements nonnegative and the sum of each row is less than or equal to $1$. Then, equation \eqref{zeta-bar1} can be rewritten as
\begin{equation}\label{zeta-bar2}
	\bar{\zeta}_i(k)=\frac{1}{2+d_{in}(i)}\sum_{j=1}^N\bar{\ell}_{ij}(y_j(k-\kappa_{ij})-y_r)
\end{equation}
with $\kappa_{ii}=0$. To guarantee that each agent can achieve the required regulation, we need to make sure that there exists a pass to each node starting with node from the set $\mathscr{C}$. Therefore, we denote the following set of graphs.
\begin{definition}\label{graph-def}
	Given a node set $\mathscr{C}$, we denote by $\mathbb{G_\mathscr{C}^N}$ the set of all directed graphs with $N$ nodes containing the node set $\mathscr{C}$, such that every node of the network graph $\mathscr{G}\in \mathbb{G_\mathscr{C}^N}$ is a member of a directed tree which has its root contained in the node set $\mathscr{C}$.
	Note that this definition does not require necessarily the existence of directed spanning tree.
\end{definition}

\begin{remark}\label{Remark-Lbar}
	From \cite[Lemma 7]{grip-yang-saberi-stoorvogel-automatica} it follows for any $\mathscr{G}\in \mathbb{G_\mathscr{C}^N}$ defined in Definition \ref{graph-def}, the associated expanded Laplacian matrix $\bar{L}$ as defined by \eqref{barL} is invertible and all the eigenvalues of $\bar{L}$ have positive real parts.
\end{remark}

In this paper, we also introduce a localized information exchange among agents. In particular, each agent $i\in\{1,\hdots N\}$ has access to the following information denoted by $\hat{\zeta}_i$, of the form 
\begin{equation}\label{zeta_hat}
	\hat{\zeta}_i(k)=\frac{1}{2+d_{in}(i)}\sum_{j=1}^N \bar{\ell}_{ij}\xi_j(k-{\kappa}_{ij})
\end{equation}
where $\xi_j \in \mathbb{R}^n$ is a variable produced internally by agent $j$ and to be defined in next sections. Given that agents communicate $y_i$ and $\xi_i$ over the same communication networks, the communication delays $\kappa_{ij}$ between agent $j$ and agent $i$ are the same in equations \eqref{zeta-bar2} and \eqref{zeta_hat} with $\kappa_{ii}=0$.

We formulate the following problem of state synchronization for networks subject to unknown, nonuniform and arbitrarily large communication delays utilizing linear scale-free collaborative protocols as follows.

\begin{problem}\label{prob1}
	Consider a MAS described by \eqref{agent} and  \eqref{zeta-bar2} and a given constant reference trajectory $y_r\in\mathbb{R}^p$. Let a set of nodes $\mathscr{C}$ be given which defines the set $\mathbb{G}_\mathscr{C}^N$. 
	
	Then, the \textbf{scalable state synchronization problem based on localized information exchange utilizing collaborative protocols} for networks subject to unknown, nonuniform and arbitrarily large communication delays is to find, if possible, a linear dynamic protocol for each agent $i \in \{1,\hdots,N\}$, using only knowledge of agent model, i,e. $(A,B,C)$, of the form
	\begin{equation}\label{pro}
		\begin{system}{cl}
		{x}_{c,i}(k+1)&=A_c x_{c,i}(k)+B_{c1} \bar{\zeta}_i(k)+B_{c2} \hat{\zeta}_i(k),\\
			u_i(k)&=F_c x_{c,i}(k),
		\end{system}
	\end{equation}
where $\hat{\zeta}_i(k)$ is defined in \eqref{zeta_hat} with $\xi_i(k)=H_c x_{c,i}(k)$ and $x_{c,i}\in \mathbb{R}^{n_c}$
	such that for any $N$, any graph $\mathscr{G}\in\mathbb{G_\mathscr{C}^N}$ and any communication delays $\kappa_{ij}\in \mathbb{Z}_{\geq 0}$ we achieve	
	\begin{enumerate}
		\item[(i)] regulated output synchronization, i.e., 
		\begin{equation}\label{output_synch}
			\lim_{k\to\infty}{(y_i(k)-y_r)}=0, \quad \text{for  } i\in \{1,\hdots,N\},
		\end{equation}
		\item[(ii)] state synchronization, i.e.,
		\begin{equation}\label{state_synch}
			\lim_{k\to\infty}{(x_i(k)-x_j(k))}=0, \quad \text{for all } i,j\in \{1,\hdots,N\}.
		\end{equation}
	\end{enumerate}
\end{problem}

\section{Main Results}
Our main results are provided in the following two subsections. In the first subsection, we consider solvability of Problem \ref{prob1} for any arbitrary given constant reference trajectory $y_r\in\mathbb{R}^p$. We show that if agents are right-invertible and have no invariant zeros equal to one, Problem \ref{prob1} is solvable for any arbitrary given constant reference trajectory and we provide protocol design for this class of agents. In the second subsection, we provide necessary and sufficient conditions for solvability of Problem \ref{prob1}. We identify a set $\mathscr{Y}_r\subseteq \mathbb{R}^p$ and we show that Problem \ref{prob1} is solvable if and only if we restrict the constant reference trajectory to this set which is independent of the communication graph and obtained solely based on agent models.



\subsection{Solvability condition and protocol design for arbitrary constant reference trajectory $y_r\in \mathbb{R}^p$}\label{sec-rig-inv}

\begin{figure}[t]
	\includegraphics[width=8cm, height=5cm]{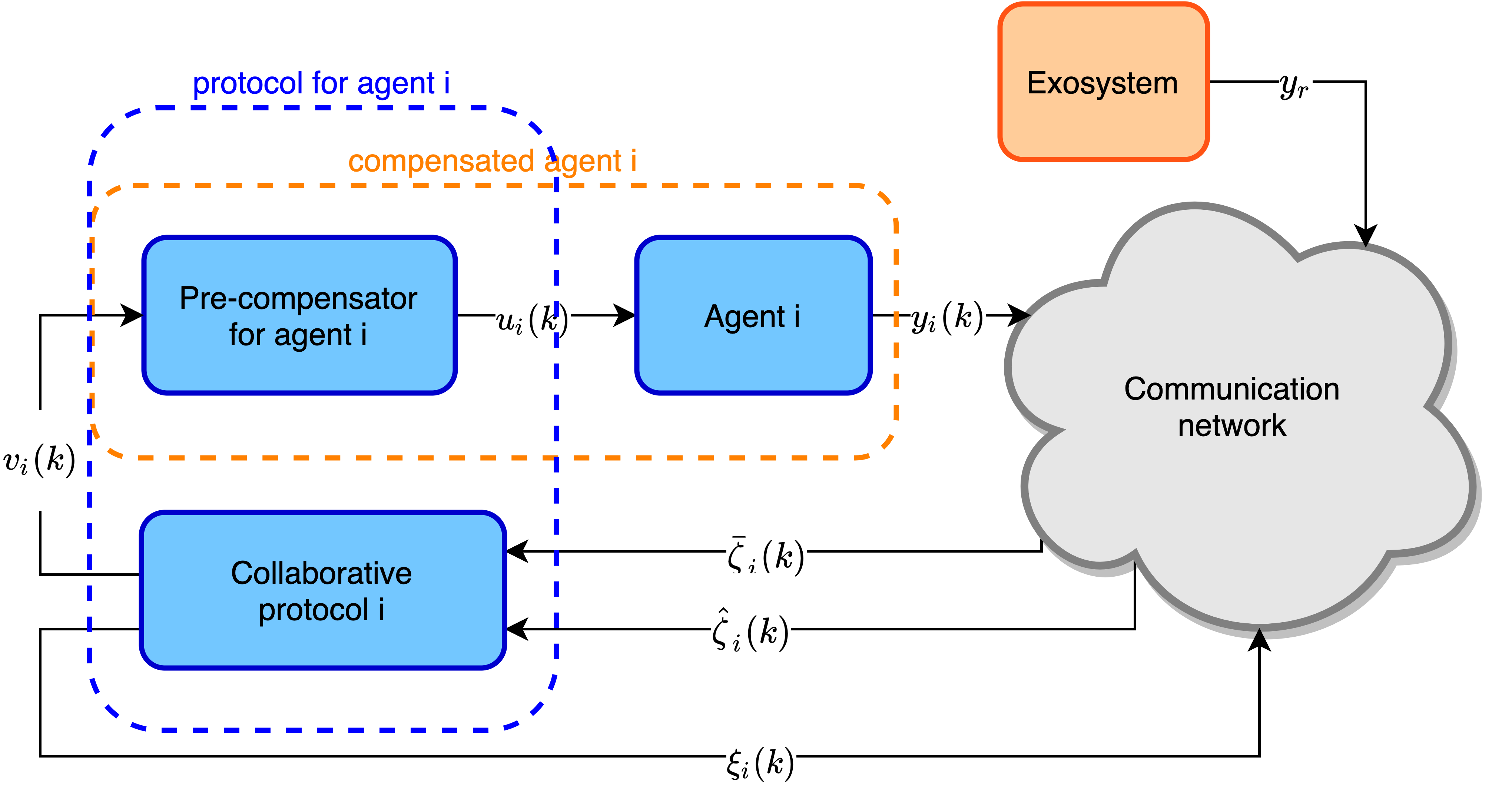}
	\centering
	\caption{Architecture of scale-free protocols}\label{communication_arch}
\end{figure}

In this subsection, we show that Problem \ref{prob1} is solvable for any given arbitrary constant trajectory $y_r\in\mathbb{R}^p$ as long as the agents are right-invertible which has no invariant zeros at one. We design protocols for this class of agents. The architecture of the protocols is shown in Figure
\ref{communication_arch}. As it is shown in the figure, the design consists of two steps. The first step is designing a pre-compensator for each agent to be able to regulate the states to a constant value. In the second step, we design collaborative protocols for the compensated agents to achieve state synchronization.

\hspace{3mm} \textbf{Step I:}  First we find an injective matrix $V$ such that 
\begin{equation}\label{regmat}
\begin{pmatrix}
A & BV\\C & 0
\end{pmatrix}
\end{equation}
is square and invertible. Such a matrix exists. To show that, we observe agent model described by $(A,B,C)$ is right-invertible and has no invariant zeros at one, hence we have the matrix
\begin{equation}\label{mat1}
\begin{pmatrix}
A & B\\C & 0
\end{pmatrix}
\end{equation}
is full-row rank. Also, due to the detectability of $(A,C)$, we have the first $n$ columns of \eqref{mat1} are linearly independent. Therefore the existence of the injective matrix $V$ is guaranteed. Next, we consider the so-called regulator equations.
\[
\begin{pmatrix}
A-I & B\\C & 0
\end{pmatrix}\begin{pmatrix}
\Pi\\ \Gamma
\end{pmatrix}=\begin{pmatrix}
0\\V
\end{pmatrix}.
\]
The invertibility of \eqref{regmat} implies that the regulator equation has a unique solution. Meanwhile, invertibility of \eqref{regmat} means that
\[
\rank{\begin{pmatrix}
	A-I & B\Gamma\\C & 0
	\end{pmatrix}}=n+\rank{\Gamma}.
\]

Then, we design the following precompensator for each agent of MAS \eqref{agent}.
\begin{tcolorbox}[colback=white]
	\begin{equation}\label{precom}
	\begin{system*}{cl}
	p_i(k+1)&=p_i(k)+\begin{pmatrix}
	0&I
	\end{pmatrix}v_i(k), \qquad\quad p_i(k)\in \mathbb{R}^v\\
	u_i(k)&=\Gamma_1 p_i(k)+\begin{pmatrix}
	\Gamma_2&0
	\end{pmatrix}v_i(k)
	\end{system*}
	\end{equation}
	where $v_i(k)$ is new input, $\Gamma_1$ is injective and satisfies $\im \Gamma=\im \Gamma_1$ with $v=\rank{\Gamma}$. Moreover, $\Gamma_2$ is chosen such that 
	\begin{equation}\label{gamam-mat}
	\begin{pmatrix}
	\Gamma_1&\Gamma_2
	\end{pmatrix}
	\end{equation}
	is square and invertible. 
\end{tcolorbox}

In order to design collaborative protocols we first obtain the compensated agents by combining \eqref{agent} and \eqref{precom} as
\begin{equation}\label{comMAS}
\begin{system*}{cl}
\bar{x}_i(k+1)&=\bar{A}\bar{x}_i(k)+\bar{B}v_i(k)\\
y_i(k)&=\bar{C}\bar{x}_i(k)
\end{system*}
\end{equation}
where
\[
\bar{x}_i(k)=\begin{pmatrix}
x_i(k)\\p_i(k)
\end{pmatrix}, \bar{A}=\begin{pmatrix}
A&B\Gamma_1\\0&I
\end{pmatrix},\bar{B}=\begin{pmatrix}
B\Gamma_2&0\\0&I
\end{pmatrix}, \bar{C}=\begin{pmatrix}
C&0
\end{pmatrix}.
\]
We also need to verify stabilizability of $(\bar{A},\bar{B})$ and detectability of $(\bar{A},\bar{C})$. The stabilizability follows immediately from the invertibility of \eqref{gamam-mat} and the stabilizability of $(A, B)$. For detectability we need to verify that 
\[
\rank{\begin{pmatrix}
	zI-A&-B\Gamma_1\\0&(z-1)I\\C&0
	\end{pmatrix}}=n+v=n+\rank{\Gamma_1}
\]
for all $z$ outside or on the unit circle. If $z\ne 1$, then it immediately follows the detectability of $(A,C)$. When $z=1$, one have 
\[
\rank{\begin{pmatrix}
	I-A&-B\Gamma_1\\0&0\\C&0
	\end{pmatrix}}=\rank{\begin{pmatrix}
	I-A&-B\Gamma_1\\C&0
	\end{pmatrix}}=n+\rank{\Gamma_1}.
\]

\hspace{3mm} \textbf{Step II:} In this step, the following linear dynamic protocol is designed for the compensated agents \eqref{comMAS} as
\begin{tcolorbox}[colback=white]
	\begin{equation}\label{pro-lin-partial}
	\begin{system}{cll}
	\hat{x}_i(k+1) &=& \bar{A}\hat{x}_i(k)-\bar{B}K\hat{\zeta}_i(k)+F(\bar{\zeta}_i(k)-\bar{C}\hat{x}_i(k))\\
	\chi_i(k+1) &=& \bar{A}\chi_i(k)+\bar{B}v_i(k)+\bar{A}\hat{x}_i(k)-\bar{A}\hat{\zeta}_{i}(k)\\
	v_i (k)&=& -K\chi_i(k),
	\end{system}
	\end{equation}
	where matrices $K$ and $F$ are such that $\bar{A}-F\bar{C}$ and $\bar{A}-\bar{B}K$ are Schur stable.
	In this protocol, agents communicate $\xi_i(k)=\chi_i(k)$, i.e. each agent has access to the localized information exchange
	\begin{equation}\label{add_1}
	\hat{\zeta}_{i}=\frac{1}{2+d_{in}(i)}\sum_{j=1}^N\bar{\ell}_{ij}\chi_j(k-{\kappa}_{ij}),
	\end{equation}

	while $\bar{\zeta}_i$ is defined via \eqref{zeta-bar2}.
\end{tcolorbox}

We formulate the following theorem.

\begin{theorem}\label{th-par-linear}
Consider a MAS described by \eqref{agent} and  \eqref{zeta-bar2} where $(A,B)$ is stabilizable and $(A,C)$ is detectable. Assume Assumption \ref{agentass2} is satisfied. Let a set of nodes $\mathscr{C}$ be given which defines the set $\mathbb{G}_\mathscr{C}^N$. 

Then, the scalable state synchronization problem utilizing localized information exchange via linear dynamic protocol as stated in Problem \ref{prob1} is solvable for any $y_r \in \mathbb{R}^p$ \textbf{if} the system represented by $(A,B,C)$ is right-invertible and has no invariant zeros at one. More specifically, under these conditions, for any given constant reference trajectory $y_r\in \mathbb{R}^p$, protocol \eqref{pro-lin-partial} and \eqref{precom} achieves scalable state synchronization for any communication delays $\kappa_{ij}\in \mathbb{Z}_{\geq 0}$ and any graph $\mathscr{G} \in \mathbb{G}_\mathscr{C}^N$ with any size of the network $N$.
\end{theorem}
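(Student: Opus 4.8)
The plan is to reduce the closed loop, in suitable error coordinates, to a cascade of three blocks — a distributed observer error, a ``disagreement'' variable, and the regulation error — arranged so that the unknown delays act only inside the middle block, and then to establish \emph{delay-independent} stability of that block. First I would fix the target trajectory. Because $(A,B,C)$ is right-invertible with no invariant zero at one, the regulator equations of Step~I have a solution, so the compensated system \eqref{comMAS} admits a state $\bar{x}_{ss}$ with $\bar{A}\bar{x}_{ss}=\bar{x}_{ss}$ and $\bar{C}\bar{x}_{ss}=y_r$, with steady input $v_{ss}=0$ and controller value $\chi_{ss}=0$. Setting $\tilde{x}_i=\bar{x}_i-\bar{x}_{ss}$ and using that the row sums of $\bar{L}$ equal $\iota_i$ together with $\bar{C}\bar{x}_{ss}=y_r$, the measured signal collapses to $\bar{\zeta}_i=\bar{C}\bar{\bar{x}}_i$, where $\bar{\bar{x}}_i(k):=\frac{1}{2+d_{in}(i)}\sum_j\bar{\ell}_{ij}\tilde{x}_j(k-\kappa_{ij})$ is the delayed, network-weighted regulation error, and the true dynamics read $\tilde{x}_i(k+1)=\bar{A}\tilde{x}_i(k)-\bar{B}K\chi_i(k)$.

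The decisive observation is that $\hat{x}_i$ estimates $\bar{\bar{x}}_i$ without delay distortion. Since $\hat{\zeta}_i$ in \eqref{add_1} uses exactly the same weights and the same delays $\kappa_{ij}$, substituting the $\tilde{x}_j$-recursion into $\bar{\bar{x}}_i$ gives $\bar{\bar{x}}_i(k+1)=\bar{A}\bar{\bar{x}}_i(k)-\bar{B}K\hat{\zeta}_i(k)$. Subtracting this from the $\hat{x}_i$-equation of \eqref{pro-lin-partial}, the $\bar{B}K\hat{\zeta}_i$ terms cancel and the observer error $\hat{e}_i:=\hat{x}_i-\bar{\bar{x}}_i$ satisfies $\hat{e}_i(k+1)=(\bar{A}-F\bar{C})\hat{e}_i(k)$ — decoupled across agents and, crucially, \emph{delay-free}. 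By the choice of $F$ this is Schur, so $\hat{e}_i\to0$ exponentially, independently of the delays.

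Next I introduce the disagreement variable $\delta_i:=\tilde{x}_i-\chi_i$. Writing $\hat{x}_i=\bar{\bar{x}}_i+\hat{e}_i$ and noting that $\bar{\bar{x}}_i-\hat{\zeta}_i$ is the delayed network average of $\delta$, a short computation yields, in stacked form, $\delta(k+1)=\sum_{\tau\ge0}(M_\tau\otimes\bar{A})\,\delta(k-\tau)-(I\otimes\bar{A})\hat{e}(k)$, where each $M_\tau$ is entrywise nonnegative (it collects the normalized couplings whose delay equals $\tau$, the diagonal normalization keeping $M_0\ge0$) and $\sum_\tau M_\tau=\bar{D}$. At the same time $\tilde{x}_i(k+1)=(\bar{A}-\bar{B}K)\tilde{x}_i(k)+\bar{B}K\delta_i(k)$. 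This is the promised cascade: the delay-free Schur block $\hat{e}$ forces the $\delta$-block, which in turn drives $\tilde{x}$ through the Schur matrix $\bar{A}-\bar{B}K$. Hence, once I prove $\delta\to0$ for arbitrary delays, vanishing-input stability gives $\tilde{x}\to0$, so $\bar{x}_i\to\bar{x}_{ss}$ for every $i$; this delivers output regulation $y_i\to y_r$ and state synchronization $x_i-x_j\to0$. Scale-freeness is automatic, since $V,\Gamma,F,K$ depend only on $(A,B,C)$, not on $N$ or the graph.

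The main obstacle is the delay-independent stability of the $\delta$-block, which is genuinely delicate: under Assumption~\ref{agentass2} the matrix $\bar{A}$ is only marginally stable and may carry nontrivial Jordan blocks on the unit circle, so $\|\bar{A}^k\|$ can grow polynomially and no fixed quadratic Lyapunov--Krasovskii certificate for $\bar{A}$ alone exists; the strict contraction of the coupling must dominate this growth for \emph{every} delay pattern. I would exploit two facts. First, $\bar{D}$ is nonnegative and substochastic while $I-\bar{D}=(2I+D_{in})^{-1}\bar{L}$ is nonsingular by Remark~\ref{Remark-Lbar}; since by Perron--Frobenius $\rho(\bar{D})$ is itself an eigenvalue of $\bar{D}$, nonsingularity of $I-\bar{D}$ forces $\rho(\bar{D})<1$. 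Second, because $\bar{A}$ is common to all terms, the characteristic function of the $\delta$-block factors as $\det\!\bigl(zI-(\sum_\tau z^{-\tau}M_\tau)\otimes\bar{A}\bigr)$. For $|z|\ge1$ one has $|z|^{-\tau}\le1$, so entrywise $|\sum_\tau z^{-\tau}M_\tau|\le\sum_\tau M_\tau=\bar{D}$, whence $\rho(\sum_\tau z^{-\tau}M_\tau)\le\rho(\bar{D})<1$ by monotonicity of the spectral radius on nonnegative matrices. The eigenvalues of the factored operator are thus products $\lambda_i\bigl(\sum_\tau z^{-\tau}M_\tau\bigr)\,\mu_j(\bar{A})$ of modulus at most $\rho(\bar{D})\,\rho(\bar{A})\le\rho(\bar{D})<1\le|z|$, so no such $z$ is a characteristic root on or outside the unit circle. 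As any fixed configuration involves finitely many finite delays, this gives exponential stability of the $\delta$-block that holds uniformly in — and hence is independent of — the unknown, nonuniform and arbitrarily large delays, completing the argument.
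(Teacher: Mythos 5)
Your proof is correct, and its skeleton coincides with the paper's own: your coordinates $\tilde{x}_i$, $\delta_i=\tilde{x}_i-\chi_i$ and $\hat{e}_i=\hat{x}_i-\bar{\bar{x}}_i$ are exactly the paper's $\tilde{x}$, $\delta$ and (up to sign) $\bar{\delta}$, and both arguments reduce the closed loop to the same triangular cascade in which the unknown delays enter only the middle ($\delta$) block, with the observer error evolving delay-free through the Schur matrix $\bar{A}-F\bar{C}$ and the regulation error driven through the Schur matrix $\bar{A}-\bar{B}K$. Where you genuinely depart from the paper is in certifying delay-independent stability of the $\delta$-block. The paper works in the frequency domain and leans on two cited results: Lemma \ref{hode-lemma-system-c} (Schur stability of the undelayed system plus a determinant condition on the unit circle implies stability for all delays) and Lemma \ref{ljwlem} (the eigenvalues of $\bar{D}_{\text{\bf j}\omega}(\kappa)$ remain bounded by those of $\bar{D}$). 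You instead write the block in the time domain as $\delta(k+1)=\sum_{\tau}(M_\tau\otimes\bar{A})\delta(k-\tau)+\cdots$ with entrywise nonnegative $M_\tau$ summing to $\bar{D}$, and analyze the characteristic function directly for all $|z|\geq 1$: entrywise domination $\bigl|\sum_\tau z^{-\tau}M_\tau\bigr|\leq\bar{D}$ together with monotonicity of the spectral radius on nonnegative matrices gives $\rho\bigl(\sum_\tau z^{-\tau}M_\tau\bigr)\leq\rho(\bar{D})<1$, so the Kronecker eigenvalue-product bound excludes characteristic roots on or outside the unit circle; since finitely many integer delays make the system finite-dimensional after state augmentation, exponential stability follows. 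This replaces both cited lemmas with inline elementary arguments — your $|z|\geq1$ bound contains Lemma \ref{ljwlem} as the special case $|z|=1$, and the augmentation observation makes Lemma \ref{hode-lemma-system-c}, including its separate no-delay step, unnecessary — and your Perron--Frobenius argument that invertibility of $I-\bar{D}$ forces $\rho(\bar{D})<1$ is a clean proof of a fact the paper merely asserts. The trade-off is that the paper's machinery extends to settings where finite augmentation is unavailable (e.g.\ continuous time), whereas your route is self-contained and tighter for the discrete-time case at hand. One small caveat: your closing claim of stability holding \emph{uniformly} in the delays is neither needed nor quite justified (the decay rate degrades with the augmented dimension, hence with the largest delay); the theorem only quantifies over fixed, arbitrary delay configurations, which your argument fully covers.
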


To obtain the result of Theorem \ref{th-par-linear}, we need to the following lemmas.

\begin{lemma}\cite[Lemma 3]{zhang-saberi-stoorvogel-delay}\label{hode-lemma-system-c}
	Consider a linear time-delay system
	\begin{equation}\label{systeminlem}
	x(k+1)=Ax(k)+\sum_{i=1}^{m}A_{i}x(k-\kappa_i),
	\end{equation}
	where $x(k)\in\R^{n}$ and $\kappa_{i}\in \mathbb{Z}_{\geq 0}$. Assume that 
	$A+\sum_{i=1}^{m}A_{i}$ is Schur stable. Then,
	\eqref{systeminlem} is asymptotically stable, if
	\begin{equation}\label{det-cond}
	\det\left[e^{\text{\bf j}\omega} I-A- \sum_{i=1}^{m}e^{-\text{\bf j}\omega\kappa_i}A_{i}\right]\neq 0,
	\end{equation}
	for all $\omega\in[-\pi,\pi]$, and for all $\kappa_{i}\in \mathbb{Z}_{\geq 0}$ with $0<\kappa_{i}\leq\kappa^r$ for $i\in\{1,\hdots, N\}$.
\end{lemma}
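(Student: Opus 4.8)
The plan is to recast asymptotic stability of \eqref{systeminlem} as a root-location problem for its characteristic function and then to track roots by a homotopy argument anchored at the zero-delay system. First I would pass from the time-delay recursion to a genuine finite-dimensional linear system: since each $\kappa_i$ is a nonnegative integer, setting $\bar\kappa=\max_i\kappa_i$ and stacking $X(k)=(x(k)\T,x(k-1)\T,\ldots,x(k-\bar\kappa)\T)\T$ turns \eqref{systeminlem} into $X(k+1)=\mathcal{A}X(k)$ with a block-companion matrix $\mathcal{A}$. The delay system is asymptotically stable iff $\mathcal{A}$ is Schur, and the companion-determinant formula gives $\det(zI-\mathcal{A})=z^{n\bar\kappa}\,\chi(z)$, where $\chi(z):=\det\!\left(zI-A-\sum_{i=1}^m z^{-\kappa_i}A_i\right)$. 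Thus, after clearing the pole at the origin, $P(z):=z^{n\bar\kappa}\chi(z)$ is a polynomial of the fixed degree $n(\bar\kappa+1)$ whose zeros are exactly the eigenvalues of $\mathcal{A}$, and asymptotic stability is equivalent to all zeros of $P$ lying in the open unit disk. The hypothesis \eqref{det-cond} says precisely that $\chi(e^{\text{\bf j}\omega})\neq0$ for all $\omega$, so $P$ has no zero on the unit circle.

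Second, because $P$ has no zeros on $|z|=1$, the argument principle expresses the number of zeros of $P$ inside the unit disk as the winding number of $\omega\mapsto P(e^{\text{\bf j}\omega})$ about the origin. The goal becomes to show this winding number equals $\deg P=n(\bar\kappa+1)$, which forces all zeros inside. To pin down the count I would deform the delays to zero: introduce $t\in[0,1]$, replace the delays by $t\kappa_i$, and obtain a continuous family $P_t$ of polynomials of constant degree $n(\bar\kappa+1)$. At $t=0$ one has $P_0(z)=z^{n\bar\kappa}\det(zI-A-\sum_i A_i)$, whose zeros are $n\bar\kappa$ copies of the origin together with the eigenvalues of $A+\sum_i A_i$; by the Schur hypothesis on $A+\sum_iA_i$ these all lie in the open unit disk, so at $t=0$ the winding number already equals $n(\bar\kappa+1)$. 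If the boundary nonvanishing $P_t(e^{\text{\bf j}\omega})\neq0$ persists for every $t\in[0,1]$ and every $\omega$, the winding number is invariant along the homotopy and the desired count survives to $t=1$, completing the proof.

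The crux — and the step I expect to fight — is maintaining this boundary nonvanishing throughout the homotopy. The condition needed at parameter $t$ is $\det\!\left(e^{\text{\bf j}\omega}I-A-\sum_i e^{-\text{\bf j}\omega t\kappa_i}A_i\right)\neq0$, i.e.\ nonvanishing for the \emph{real} delays $t\kappa_i$, whereas \eqref{det-cond} is literally imposed only for \emph{integer} delays. A naive coefficient homotopy runs into the same wall, since any continuous path joining the given delays to zero must traverse phase values $e^{-\text{\bf j}\omega\tau}$ with noninteger $\tau$ that are not directly covered by the stated hypothesis. The way I would close this gap is to upgrade \eqref{det-cond} from integer to real delays: for $\omega/2\pi$ irrational the points $e^{-\text{\bf j}\omega\kappa_i}$ generated by integer $\kappa_i$ are dense in the unit circle, so letting the $\kappa_i$ range (the delays being allowed arbitrarily large) and using continuity of the determinant shows that $\det\!\left(e^{\text{\bf j}\omega}I-A-\sum_i e^{-\text{\bf j}\omega\tau_i}A_i\right)$ cannot vanish for any real $\tau_i$ either; the exceptional rational $\omega$ are then recovered by continuity in $\omega$. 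Establishing this density-and-continuity extension rigorously, and checking that the family $P_t$ keeps constant degree and depends continuously on $t$, is where the real work lies; once these are in place, the argument-principle count is immediate and yields the stated asymptotic stability.
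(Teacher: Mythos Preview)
The paper does not supply a proof of this lemma; it is quoted verbatim from \cite{zhang-saberi-stoorvogel-delay}, so there is no in-paper argument to compare against. Your outline (pass to the block-companion system, reduce stability to a root-location problem for $P(z)=z^{n\bar\kappa}\chi(z)$, and pin down the root count by a homotopy to zero delay anchored at the Schur hypothesis on $A+\sum_iA_i$) is indeed the standard route and matches how such results are typically proved.

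That said, the proposal has two weak points. The first is minor: for non-integer $t\kappa_i$ the object $P_t(z)=z^{n\bar\kappa}\det\!\big(zI-A-\sum_i z^{-t\kappa_i}A_i\big)$ is \emph{not} a polynomial in $z$ (fractional powers appear), so the phrase ``polynomials of constant degree'' is not accurate. This can be repaired by abandoning analyticity in the disk and arguing purely with the winding number of the continuous closed curve $\omega\mapsto P_t(e^{\text{\bf j}\omega})$, which is well defined for every $t$ and invariant under homotopies avoiding the origin; at the endpoints $t=0,1$ you do have genuine polynomials and the winding number equals the number of zeros inside.

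The second point is a real gap: your density argument does not establish the needed boundary nonvanishing along the homotopy. The hypothesis \eqref{det-cond} is stated only for integer delays with $0<\kappa_i\le\kappa^r$, so only finitely many phase values $e^{-\text{\bf j}\omega\kappa_i}$ are available and no density is present. Even if one allows arbitrarily large integer delays (as you do), density of $\{e^{-\text{\bf j}\omega\kappa}:\kappa\in\mathbb Z_{\ge0}\}$ in the unit circle together with nonvanishing of the determinant at those points does \emph{not} force nonvanishing at every real phase: the zero set of a continuous function is closed and can perfectly well avoid a dense subset while remaining nonempty (equivalently, a limit of nonzero determinant values may be zero). The ``recover rational $\omega$ by continuity in $\omega$'' step suffers from the same defect. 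Hence the key claim that $P_t(e^{\text{\bf j}\omega})\neq0$ for all $t\in[0,1]$ is not justified from the integer-delay hypothesis alone.

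It is worth noting that when the paper actually \emph{applies} this lemma in the proof of Theorem~\ref{th-par-linear}, condition \eqref{cond} is checked for all real $\kappa_{ij}\in\mathbb R_{\ge0}$, not just integers; under that stronger hypothesis your homotopy goes through without any density step. If you want to prove the lemma as literally stated (integer delays bounded by $\kappa^r$), you should replace the real-parameter homotopy by one that stays within genuine polynomial families --- for instance, reduce the delays one unit at a time and, on each step, interpolate the companion matrix linearly (so $P_t$ is a bona fide polynomial of fixed degree with coefficients continuous in $t$) --- and then argue directly that no root meets the unit circle along each segment.
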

\begin{lemma}\label{ljwlem} 
	Let $\beta$ be an upper bound for the eigenvalues of $\bar{D}$. Then, for all communication delays $\kappa_{ij}\in \mathbb{Z}_{\geq 0}$, $(i,j \in\{1 , \cdots , N\})$ and all $\omega\in\R$, all eigenvalues of matrix  
	\begin{multline}\label{ljw}
	\bar{D}_{\text{\bf j}\omega}(\kappa)=\\
	\begin{pmatrix}
	\bar{d}_{11}&\bar{d}_{12}e^{-\text{\bf j}\omega \kappa_{12}}&\cdots&\bar{d}_{1N}e^{-\text{\bf j}\omega \kappa_{1N}}\\
	\bar{d}_{21}e^{-\text{\bf j}\omega \kappa_{21}}&\ddots&\vdots&\bar{d}_{2N}e^{-\text{\bf j}\omega \kappa_{2N}}\\
	\vdots&\cdots&\ddots&\vdots\\
	\bar{d}_{N1}e^{-\text{\bf j}\omega \kappa_{N1}}&\cdots&\bar{d}_{N1}e^{-\text{\bf j}\omega \kappa_{NN}}&\bar{d}_{NN}
	\end{pmatrix}
	\end{multline}
	will be equal to or less than $\beta$, where $\bar{d}_{ij}$ are defined in $\bar{D}$.
\end{lemma}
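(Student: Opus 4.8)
The plan is to reduce the claim to the classical comparison theorem for the spectral radius of nonnegative matrices. The essential observation is that $\bar{D}_{\text{\bf j}\omega}(\kappa)$ differs from $\bar{D}$ only by unit-modulus phase factors: since $\kappa_{ii}=0$ the diagonal entries coincide, and every off-diagonal entry is $\bar{d}_{ij}e^{-\text{\bf j}\omega\kappa_{ij}}$, whose modulus is exactly $\bar{d}_{ij}$ because $|e^{-\text{\bf j}\omega\kappa_{ij}}|=1$. Recalling from the construction that $\bar{D}$ is entrywise nonnegative, this means the entrywise-modulus matrix of $\bar{D}_{\text{\bf j}\omega}(\kappa)$ equals $\bar{D}$ itself, independently of $\omega$ and of the particular delays.

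First I would invoke the standard Perron--Frobenius comparison inequality: for any complex square matrix $A$ with entrywise-modulus matrix $|A|$, the spectral radius satisfies $\rho(A)\le\rho(|A|)$. Applying this with $A=\bar{D}_{\text{\bf j}\omega}(\kappa)$ and using the identity $|\bar{D}_{\text{\bf j}\omega}(\kappa)|=\bar{D}$ established above yields $\rho(\bar{D}_{\text{\bf j}\omega}(\kappa))\le\rho(\bar{D})$. Since $\bar{D}$ is nonnegative, its Perron root $\rho(\bar{D})$ is the largest-modulus eigenvalue of $\bar{D}$, so $\rho(\bar{D})\le\beta$ by hypothesis. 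Chaining the two inequalities gives $\rho(\bar{D}_{\text{\bf j}\omega}(\kappa))\le\beta$, i.e. every eigenvalue of $\bar{D}_{\text{\bf j}\omega}(\kappa)$ has modulus at most $\beta$; the bound holds simultaneously for all $\omega\in\R$ and all admissible delays precisely because the phases never affected the modulus matrix.

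If a self-contained argument is preferred to citing the comparison theorem, I would instead argue directly from an eigenvector. Let $(\lambda,v)$ be an eigenpair of $\bar{D}_{\text{\bf j}\omega}(\kappa)$ and let $|v|$ denote the vector of entrywise moduli. Taking moduli in $\lambda v=\bar{D}_{\text{\bf j}\omega}(\kappa)v$ and applying the triangle inequality gives $|\lambda|\,|v|\le\bar{D}|v|$ entrywise, with $|v|\ge0$ and $|v|\ne0$. By the Collatz--Wielandt characterization of the spectral radius of a nonnegative matrix, this forces $|\lambda|\le\rho(\bar{D})\le\beta$, which is the desired conclusion.

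The only nontrivial ingredient is the comparison inequality $\rho(A)\le\rho(|A|)$; once that is in hand, the delays and the frequency $\omega$ play no role beyond contributing unit-modulus factors, so uniformity in $\omega$ and in $\kappa$ is automatic. Hence I expect no substantive obstacle here: the one point to get right is the clean reduction showing that the entrywise-modulus matrix of $\bar{D}_{\text{\bf j}\omega}(\kappa)$ is \emph{exactly} $\bar{D}$, after which the bound on the eigenvalues follows immediately.
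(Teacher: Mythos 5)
Your proof is correct, but there is nothing in the paper to compare it against: the paper's ``proof'' of this lemma is a one-line citation to \cite[Lemma 2]{liu2018regulated}, so your argument supplies content the paper outsources to an external reference. The argument itself is sound and complete. The key reduction---that the entrywise-modulus matrix of $\bar{D}_{\text{\bf j}\omega}(\kappa)$ is exactly $\bar{D}$, because $\kappa_{ii}=0$ fixes the diagonal, $|e^{-\text{\bf j}\omega\kappa_{ij}}|=1$, and $\bar{d}_{ij}\geq 0$---is exactly right, and the paper's construction of $\bar{D}=I-(2I+D_{in})^{-1}\bar{L}$ guarantees the entrywise nonnegativity you rely on. From there, the comparison inequality $\rho(A)\leq\rho(|A|)$ gives $\rho(\bar{D}_{\text{\bf j}\omega}(\kappa))\leq\rho(\bar{D})$ uniformly in $\omega$ and in the delays, and since $\bar{D}$ is nonnegative its spectral radius is itself an eigenvalue (the Perron root), so $\rho(\bar{D})\leq\beta$ holds whether ``upper bound for the eigenvalues'' is read as a bound on moduli or merely on the real eigenvalues. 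Your fallback argument via an eigenpair and the Collatz--Wielandt characterization is likewise valid and makes the proof fully self-contained; either version would serve as a complete in-paper replacement for the citation.
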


\begin{proof}[Proof of Lemma \ref{ljwlem}] The proof is given in \cite[Lemma 2]{liu2018regulated}.
\end{proof}

\begin{proof}[Proof of Theorem \ref{th-par-linear}]  We need to show that protocol \eqref{pro-lin-partial} and \eqref{precom} solves Problem \ref{prob1}. First, we show that there exists a $\bar{\Pi}$ such that $\bar{A}\bar{\Pi}=\bar{\Pi}$ and $\bar{C}\bar{\Pi}=V$. Let $W$ be such that $\Gamma_1 W=\Gamma$, in that case it is easy to verify that we can choose
	\[
	\bar{\Pi}=\begin{pmatrix}
	\Pi\\W
	\end{pmatrix}.
	\]
	Let $\tilde{x}_i(k)=\bar{x}_i(k)-\bar{\Pi}y_r$, we have
	\begin{equation}
	\begin{system}{cl}
	\tilde{x}_i(k+1)&=\bar{A}{\tilde{x}}_i(k)+\bar{B}v_i(k)\\
	y_i(k)-y_r&=\bar{C}\tilde{x}_i(k)
	\end{system}
	\end{equation}
	and by defining 
	\[
	\tilde{x}(k)=\begin{pmatrix}
	\tilde{x}_1(k)\\\vdots\\\tilde{x}_N(k)
	\end{pmatrix}, \quad  
	\chi(k)=\begin{pmatrix}
	\chi_1(k)\\\vdots\\\chi_N(k)
	\end{pmatrix}
	\]
	we have the following closed-loop system in frequency domain as:	
	\begin{equation}\label{newsystem1}
	\begin{system}{cl}
	e^{\text{\bf j}\omega}{\tilde{x}}=&(I\otimes \bar{A}) \tilde{x}-(I\otimes \bar{B}K)\chi\\
	e^{\text{\bf j}\omega}{\hat{x}}=&(I\otimes \bar{A}) \hat{x}-((I-\bar{D}_{\text{\bf j}\omega}({\kappa}))\otimes \bar{B}K)\chi\\
	&\qquad+((I-\bar{D}_{\text{\bf j}\omega}({\kappa})) \otimes F\bar{C})\tilde{x}-(I \otimes F\bar{C})\hat{x}\\
	e^{\text{\bf j}\omega}{\chi}=&(I\otimes (\bar{A}-\bar{B}K)) \chi-((I-\bar{D}_{\text{\bf j}\omega}({\kappa})) \otimes \bar{A})\chi+(I\otimes \bar{A})\hat{x}
	\end{system}
	\end{equation}
	where $\bar{D}_{\text{\bf j}\omega}(\kappa)$ is defined by \eqref{ljw}. Let $\delta=\tilde{x}-\chi$, and $\bar{\delta}=((I-\bar{D}_{\text{\bf j}\omega}({\kappa}) \otimes I) \tilde{x}-\hat{x}$. Then, we obtain  
	\begin{equation}\label{newsystem2}
	\begin{system}{cll}
    e^{\text{\bf j}\omega}{\tilde{x}}=&(I\otimes (\bar{A}-\bar{B}K)) \tilde{x}+(I\otimes \bar{B}K)\delta\\
	e^{\text{\bf j}\omega}{\delta}=&(\bar{D}_{\text{\bf j}\omega}(\kappa)\otimes \bar{A})\delta+(I\otimes \bar{A})\bar{\delta}\\
	e^{\text{\bf j}\omega}{\bar{\delta}}=&\left(I\otimes(\bar{A}-F\bar{C})\right)\bar{\delta}
	\end{system}
	\end{equation}
	We need to show the asymptotic stability of \eqref{newsystem2} for all communication delays $\kappa_{ij}\in \mathbb{Z}_{\geq 0}$. Since $\bar{A}-F\bar{C}$ is stable, then we have $\bar{\delta}\to 0$ as $k \to \infty$. As such asymptotic stability of \eqref{newsystem2} is implied by asymptotic stability of the following reduced system.
	\begin{equation}\label{newsystemfreq2}
	\begin{pmatrix}
	e^{\text{\bf j}\omega}{\tilde{x}}\\e^{\text{\bf j}\omega}{\delta}
	\end{pmatrix}=
	\begin{pmatrix}
	I\otimes (\bar{A}-\bar{B}K) & I\otimes  \bar{B}K \\
	0 &\bar{D}_{\text{\bf j}\omega}(\kappa)\otimes \bar{A}
	\end{pmatrix}\begin{pmatrix}
	\bar{x}\\\delta
	\end{pmatrix}
	\end{equation}
	Following Lemma \ref{hode-lemma-system-c}, we prove the stability of \eqref{newsystemfreq2} in two steps. In the first step, we prove the stability in the absence of communication delays and in the second step we prove the stability of \eqref{newsystemfreq2} by checking condition \eqref{det-cond}.
	\begin{enumerate}
		\item When there is no communication delay in the network, the stability of system \eqref{newsystem2} is equivalent to asymptotic stability of the matrix
		\begin{equation}
		\begin{pmatrix}
		I\otimes (\bar{A}-\bar{B}K) & I\otimes  \bar{B}K \\
		0 & \bar{D}\otimes \bar{A}
		\end{pmatrix}
		\end{equation}		
			where $\bar{D}=[\bar{d}_{ij}]\in\R^{N\times N}$ and we have that the eigenvalues of $\bar{D}$  are in open unit disk.
		The eigenvalues of $\bar{D}\otimes A$ are of the form
		$\lambda_i \mu_j$, with $\lambda_i$ and $\mu_j$ eigenvalues of
		$\bar{D}$ and $A$, respectively \cite[Theorem 4.2.12]{horn-johnson}. Since $|\lambda_i|<1$ and
		$|\mu_j|\leq 1$, we find $\bar{D}\otimes A$ is Schur stable. Then we have
		\begin{equation}\label{estable}
		\lim_{k\to \infty}\delta_i(k)\to 0.
		\end{equation}
		Therefore, we have that the dynamics for $\delta_i(k)$ is asymptotically stable.
		Then, we just need to prove the stability of
		\begin{equation}\label{statefeedback3}
		\tilde{x}(k+1)=(I\otimes (\bar{A}-\bar{B}K)) \tilde{x}(k)
		\end{equation}
		which $\bar{A}-\bar{B}K$ is Schur stable. Therefore, we can obtain the asymptotic stability of \eqref{newsystem2}, i.e.,
		\[
		\lim_{k\to \infty}\tilde{x}_i(k)\to 0.
		\]
		It implies that $x_i(k)-\Pi y_r\to0$, i.e. $x_i(k)\to x_j(k)$.
		
		\item Next, in the light of Lemma \ref{hode-lemma-system-c}, the closed-loop system \eqref{newsystemfreq2} is asymptotically stable for all communication delays $\kappa_{ij} \in \mathbb{Z}_{\geq 0}$, if		
		\begin{equation}\label{cond}
		\det \begin{bmatrix}
		e^{\text{\bf j}\omega} I-\begin{pmatrix}
		I\otimes (\bar{A}-\bar{B}K) & I\otimes  \bar{B}K \\
		0 & \bar{D}_{\text{\bf j}\omega}(\kappa)\otimes \bar{A}
		\end{pmatrix}
		\end{bmatrix}\ne 0
		\end{equation}
		for all $\omega \in \mathbb{R}$ and any communication delays $\kappa_{ij} \in \mathbb{R}_{\geq 0}$. Inequality \eqref{cond} is satisfied if the matrix 
		\begin{equation}\label{clmatrixfreq}
		\begin{pmatrix}
		I\otimes (\bar{A}-\bar{B}K) & I\otimes  \bar{B}K \\
		0 & \bar{D}_{\text{\bf j}\omega}(\kappa)\otimes \bar{A}
		\end{pmatrix}
		\end{equation}
		does not have any eigenvalue on the unit circle for all $\omega \in [-\pi,\pi]$ and any communication delays $\kappa_{ij} \in \mathbb{Z}_{\geq 0}$.
		In the light of Lemma \ref{ljwlem}, we have that all eigenvalues of $\bar{D}_{\text{\bf j}\omega}({\kappa})$ are in open unit disc for any ${\kappa}_{ij}$. Therefore 
		\[
		\bar{D}_{\text{\bf j}\omega}(\kappa)\otimes \bar{A}
		\]
		has all eigenvalues in open unit disc. It implies that all eigenvalues of matrix \eqref{clmatrixfreq} are in open unit disc, i.e. matrix \eqref{clmatrixfreq} does not have any eigenvalue on the unit circle for all $\omega \in [-\pi,\pi]$ and any communication delays $\kappa_{ij} \in \mathbb{Z}_{\geq 0}$. Thus we have
		\[
		\tilde{x}_i(k)\to 0 \text{ i.e. } x_i(k)\to \Pi y_r
		\]
		which means the synchronization $x_i(k)\to x_j(k)$ is achieved.
	\end{enumerate}
\end{proof}

\subsection{Necessary and sufficient solvability conditions and protocol design for general agent model}\label{set1}

In this subsection, we provide necessary and sufficient conditions for solvability of Problem \ref{prob1}. We define a set $\mathscr{Y}_r\subseteq \mathbb{R}^p$ solely based on agent models and we show that problem \ref{prob1} is solvable if and only if the the constant reference trajectory belongs to this set. In this case, plant can be general and non right-invertible. We begin first by defining set $\mathscr{Y}_r$ as following.
\begin{equation*}
	\begin{system*}{cl}
		\mathscr{Y}_r&=\bigg\{y \in \mathbb{R}^p\bigg|\begin{pmatrix}
			0\\y\end{pmatrix} \in \im \begin{pmatrix}
			A-I&B\\C&0\end{pmatrix}\bigg\}\\
		&=\{y \in \mathbb{R}^p| \exists x \in \mathbb{R}^n , u \in \mathbb{R}^m :Ax+Bu=x , Cx=y\}.
	\end{system*}
\end{equation*}
 Note that $\mathscr{Y}_r=\mathbb{R}^p$ if $(A,B,C)$ is right-invertible and without invariant zeros at one.

Next, for a given $y_r\in \mathscr{Y}_r$, we provide protocol design which has the same architecture as previous subsection. The first step is designing a pre-compensator for each agent and the second step is designing collaborative protocols for the compensated agents to achieve state synchronization. 

\hspace{3mm} \textbf{Step I:} 
Let $R$ be an injective matrix such that $\mathscr{Y}_r=\im R$. In this case, we can find the matrices $\Pi$ and $\Gamma$ such that:	
\begin{equation}\label{reg2}
\begin{pmatrix}
0\\R
\end{pmatrix}=\begin{pmatrix}
A-I&B\\C&0
\end{pmatrix}\begin{pmatrix}
\Pi\\\Gamma
\end{pmatrix}
\end{equation}
and 
\begin{equation}\label{condrank}
\rank{\begin{pmatrix}
	A-I&B\Gamma\\C&0
	\end{pmatrix}}=n+\rank{\Gamma}.
\end{equation}
Given that $(A,C)$ detectable, the first $n$ columns of $\begin{pmatrix}
A-I&B\Gamma\\C&0
\end{pmatrix}$ are linearly independent. If \eqref{condrank} is not satisfied, then there exist $x$ and $v$ such that 
\[
\begin{pmatrix}
A-I&B\Gamma\\C&0
\end{pmatrix}\begin{pmatrix}
x\\v
\end{pmatrix}=0
\]
with $B\Gamma v\ne0$ and $v\T v=1$. On the other hand, we have
\[
\begin{pmatrix}
A-I&B\\C&0
\end{pmatrix}\begin{pmatrix}
\Pi-xv\T\\\Gamma(I-vv\T)
\end{pmatrix}=0
\]
It shows that $\tilde{\Pi}=\Pi-xv\T$ and $\tilde{\Gamma}=\Gamma(I-vv\T)$ also satisfy the above equation but with $\rank{\tilde{\Gamma}}<\rank{\Gamma}$. Recursively, we can find a solution of \eqref{reg2} such that the rank condition \eqref{condrank} is satisfied.

Then, similar to precompensator design in Subsection \ref{sec-rig-inv}, with $\Gamma$ obtained as above, we have the following precompensator.
\begin{tcolorbox}[colback=white]
	\begin{equation}\label{precom2}
		\begin{system*}{cl}
	p_i(k+1)&=p_i(k)+\begin{pmatrix}
	0&I
	\end{pmatrix}v_i(k), \qquad\quad p_i(k)\in \mathbb{R}^v\\
	u_i(k)&=\Gamma_1 p_i(k)+\begin{pmatrix}
	\Gamma_2&0
	\end{pmatrix}v_i(k)
	\end{system*}
	\end{equation}
	where $v_i(k)$ is new input, $\Gamma_1$ is injective and satisfies $\im \Gamma=\im \Gamma_1$ with $v=\rank{\Gamma}$. Moreover, $\Gamma_2$ is chosen such that 
	\begin{equation}\label{gamam-mat2}
	\begin{pmatrix}
	\Gamma_1& \Gamma_2
	\end{pmatrix}
	\end{equation}
	is square and invertible. 
\end{tcolorbox}

In order to design collaborative protocols we first obtain the compensated agents by combining \eqref{agent} and \eqref{precom2} as

\begin{equation*}\label{comMAS2}
\begin{system*}{cl}
\bar{x}_i(k+1)&=\bar{A}\bar{x}_i(k)+\bar{B}v_i(k)\\
y_i(k)&=\bar{C}\bar{x}_i(k)
\end{system*}
\end{equation*}
	where 
	\[
	\bar{x}_i(k)=\begin{pmatrix}
	x_i(k)\\p_i(k)
	\end{pmatrix}, \bar{A}=\begin{pmatrix}
	A&B\Gamma_1\\0&I
	\end{pmatrix},\bar{B}=\begin{pmatrix}
	B\Gamma_2&0\\0&I
	\end{pmatrix}, \bar{C}=\begin{pmatrix}
	C&0
	\end{pmatrix}.
	\]
We need to verify the stabilizability and detectability of the compensated system. The stability follows immediately from \eqref{gamam-mat2} and the stabilizability of $(A, B)$, and for detectability we need to verify that 
	\[
	\rank{\begin{pmatrix}
		zI-A&-B\Gamma_1\\0&(z-1)I\\C&0
		\end{pmatrix}}=n+v
	\]
	where $v$ is such that $\Gamma_1 \in \mathbb{R}^{n \times v}$ for all $z$ outside or on the unit circle. For $z\ne 1$, this immediately follows form the detectability of $(C,A)$. For $z=1$, we have 
	\begin{multline*}
	\rank{\begin{pmatrix}
		I-A&-B\Gamma_1\\0&0\\C&0
		\end{pmatrix}}=\rank{\begin{pmatrix}
		I-A&-B\Gamma\\C&0
		\end{pmatrix}}\\=n+\rank{\Gamma_1}=n+v.
	\end{multline*}
Since $\rank{\Gamma}=\rank{\Gamma_1}$ and $\rank{\Gamma_1}=v$ (since $\Gamma_1$ is injective), we can obtain $(\bar{C},\bar{A})$ is detectable.	

\hspace{3mm} \textbf{Step II:} In this step, we design collaborative protocol for the compensated agents similar to collaborative protocol designed in Subsection \ref{sec-rig-inv}. 	
\begin{tcolorbox}[colback=white]
\begin{equation}\label{pro-lin-partial2}
\begin{system}{cll}
\hat{x}_i(k+1) &=& \bar{A}\hat{x}_i(k)-\bar{B}K\hat{\zeta}_i(k)+F(\bar{\zeta}_i(k)-\bar{C}\hat{x}_i(k))\\
\chi_i(k+1) &=& \bar{A}\chi_i(k)+\bar{B}v_i(k)+\bar{A}\hat{x}_i(k)-\bar{A}\hat{\zeta}_{i}(k)\\
v_i (k)&=& -K\chi_i(k),
\end{system}
	\end{equation}
	where matrices $K$ and $F$ are such that $\bar{A}-F\bar{C}$ and $\bar{A}-\bar{B}K$ are Schur stable.
	In this protocol, agents communicate $\xi_i(k)=\chi_i(k)$, i.e. each agent has access to the localized information exchange
\begin{equation}\label{add_2}
	\hat{\zeta}_{i}=\frac{1}{2+d_{in}(i)}\sum_{j=1}^N\bar{\ell}_{ij}\chi_j(k-{\kappa}_{ij}),
\end{equation}

while $\bar{\zeta}_i$ is defined via \eqref{zeta-bar2}.
\end{tcolorbox}

Then, we have the following theorem.
	\begin{theorem}\label{th-par-linear2}
	Consider a MAS described by \eqref{agent} and  \eqref{zeta-bar2} where $(A,B)$ is stabilizable and $(A,C)$ is detectable. Assume Assumption \ref{agentass2} is satisfied. Let a set of nodes $\mathscr{C}$ be given which defines the set $\mathbb{G}_\mathscr{C}^N$. 
	
	Then, the scalable state synchronization problem with localized information exchange via linear dynamic protocol as stated in Problem \ref{prob1} is solvable \textbf{if and only if} $y_r \in \mathscr{Y}_r$. More specifically, for any $y_r \in \mathscr{Y}_r$, protocol \eqref{pro-lin-partial2} and \eqref{precom2} achieves scalable state synchronization for any communication delays $\kappa_{ij}\in \mathbb{Z}_{\geq 0}$ and any graph $\mathscr{G} \in \mathbb{G}_\mathscr{C}^N$ with any size of the network $N$.
\end{theorem}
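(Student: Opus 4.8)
The plan is to prove the equivalence in both directions; the sufficiency (``if'') part is a direct transcription of the argument for Theorem \ref{th-par-linear}, so the genuinely new content is the necessity (``only if'') part. For sufficiency, suppose $y_r\in\mathscr{Y}_r=\im R$, so $y_r=R\alpha$ for some vector $\alpha$. Having solved the regulator equation \eqref{reg2} for $\Pi,\Gamma$ with the rank normalization \eqref{condrank} arranged as in Step~I, and choosing $W$ with $\Gamma_1 W=\Gamma$, I would set $\bar{\Pi}=\begin{pmatrix}\Pi\\W\end{pmatrix}$. Exactly as in the proof of Theorem \ref{th-par-linear}, one verifies $\bar{A}\bar{\Pi}=\bar{\Pi}$ and $\bar{C}\bar{\Pi}=R$, so that $\bar{\Pi}\alpha$ is an equilibrium of the compensated agent built from \eqref{agent} and \eqref{precom2} whose output equals $R\alpha=y_r$. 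Introducing $\tilde{x}_i(k)=\bar{x}_i(k)-\bar{\Pi}\alpha$ and then the changes of variable $\delta=\tilde{x}-\chi$ and $\bar{\delta}=((I-\bar{D}_{\text{\bf j}\omega}(\kappa))\otimes I)\tilde{x}-\hat{x}$, the closed-loop error under protocol \eqref{pro-lin-partial2} and \eqref{add_2} takes exactly the form \eqref{newsystem2}. Since $(\bar{A},\bar{B})$ is stabilizable and $(\bar{A},\bar{C})$ detectable (verified in Step~I), $\bar{A}-\bar{B}K$ and $\bar{A}-F\bar{C}$ are Schur; Lemma \ref{ljwlem} confines the eigenvalues of $\bar{D}_{\text{\bf j}\omega}(\kappa)$ to the open unit disc, and Lemma \ref{hode-lemma-system-c} delivers asymptotic stability of the error system for every $N$, every $\mathscr{G}\in\mathbb{G}_\mathscr{C}^N$ and all delays, giving $\tilde{x}_i\to0$ and hence both state synchronization and $y_i\to\bar{C}\bar{\Pi}\alpha=y_r$.

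For necessity, suppose some protocol of the form \eqref{pro} solves Problem \ref{prob1} for the given $y_r$. I would extract from the closed loop a genuine equilibrium of a single agent whose output is $y_r$. Fixing agent $1$, its state obeys $x_1(k+1)=Ax_1(k)+Bu_1(k)$ \emph{exactly}, and regulated output synchronization gives $Cx_1(k)\to y_r$. The key idea is to time-average this identity along the tail of the trajectory. Because Assumption \ref{agentass2} forbids eigenvalues of $A$ outside the unit disc, the asymptotic motion is bounded (unit-circle modes contribute only bounded, possibly quasiperiodic oscillations), so the Ces\`aro averages $\bar{x}=\lim_{T\to\infty}\frac{1}{T}\sum_{k=0}^{T-1}x_1(k)$ and $\bar{u}=\lim_{T\to\infty}\frac{1}{T}\sum_{k=0}^{T-1}u_1(k)$ exist. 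Averaging $x_1(k+1)=Ax_1(k)+Bu_1(k)$ and discarding the $O(1/T)$ boundary terms yields $\bar{x}=A\bar{x}+B\bar{u}$, while averaging $Cx_1(k)\to y_r$ gives $C\bar{x}=y_r$. Thus $(\bar{x},\bar{u})$ satisfies $A\bar{x}+B\bar{u}=\bar{x}$ and $C\bar{x}=y_r$, which is precisely the condition $y_r\in\mathscr{Y}_r$.

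I expect the necessity direction to be the main obstacle, and within it the delicate point is the rigor of the steady-state/averaging step: one must guarantee that the closed-loop trajectory stays bounded so the Ces\`aro limits exist, and that the averaged pair $(\bar{x},\bar{u})$ genuinely solves the equilibrium equation rather than a time-averaged relaxation of it. This is exactly where Assumption \ref{agentass2} is indispensable, since it excludes exponentially growing modes and forces the asymptotic dynamics into bounded oscillations that average cleanly onto a fixed point; a subsidiary check is that an unstable or unobservable controller mode cannot both persist and be consistent with $Cx_1(k)\to y_r$. By contrast, the sufficiency direction introduces no new difficulty beyond confirming that the Step~I construction with $R$ reproduces the equilibrium $\bar{\Pi}\alpha$ and the error dynamics \eqref{newsystem2} already analyzed for Theorem \ref{th-par-linear}.
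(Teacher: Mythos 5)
Your sufficiency argument is correct and is essentially the paper's own proof: the paper likewise sets $\bar{\Pi}=\begin{pmatrix}\Pi\\ W\end{pmatrix}$ with $\Gamma_1W=\Gamma$, puts $\tilde{x}_i=\bar{x}_i-\bar{\Pi}z$ where $y_r=Rz$, verifies $\bar{A}\bar{\Pi}=\bar{\Pi}$ and $\bar{C}\bar{\Pi}=R$, and then repeats the proof of Theorem \ref{th-par-linear} verbatim.

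The necessity half is where you part ways with the paper, and it contains a genuine gap. The paper disposes of necessity with a one-line assertion that tracking a constant trajectory forces the existence of a pair $(x_0,u_0)$ solving the regulator equation; you instead try to \emph{derive} that pair by Ces\`aro-averaging the trajectory of one agent. This derivation stands or falls with the boundedness of $(x_1(k),u_1(k))$, and your justification of boundedness is untenable. First, Assumption \ref{agentass2} permits repeated unit-circle eigenvalues (the paper's Remark 1 stresses that agents may be unstable), so free motions can grow polynomially. Second, and more fundamentally, Problem \ref{prob1} only requires $y_i(k)\to y_r$ and $x_i(k)-x_j(k)\to 0$; neither requirement bounds the agent state or the controller-generated input, so there is nothing to average.

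This is not a repairable technicality: boundedness is exactly where the mathematical content of necessity sits. Consider
\[
A=\begin{pmatrix}0&0\\0&\tfrac12\end{pmatrix},\qquad B=\begin{pmatrix}1\\1\end{pmatrix},\qquad C=\begin{pmatrix}2&-1\end{pmatrix},
\]
which is controllable, observable and even Schur, so every hypothesis of the theorem holds. Every equilibrium pair of this agent has output zero, hence $\mathscr{Y}_r=\{0\}$; yet the ramp trajectory $x(k)=\begin{pmatrix}1-k\\ 4-2k\end{pmatrix}$, $u(k)=-k$ satisfies $x(k+1)=Ax(k)+Bu(k)$ with $Cx(k)\equiv-2\notin\mathscr{Y}_r$. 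So the only two facts your argument uses — the agent recursion and $Cx_1(k)\to y_r$ — are by themselves consistent with $y_r\notin\mathscr{Y}_r$. Moreover, a controller with a double pole at $z=1$ (an internal model of a ramp) realizes exactly this behavior in closed loop, achieving $y(k)\to y_r$ from every initial condition while its ramping mode remains unobservable from $y-y_r$; this also refutes your subsidiary claim that a persisting unobservable controller mode is incompatible with $Cx_1(k)\to y_r$. To close the argument you would need to show that a protocol solving Problem \ref{prob1} for \emph{all} graphs, sizes and delays cannot sustain such unbounded steady motions — or argue necessity by an entirely different route; as written, the Ces\`aro limits you invoke need not exist, and when they do not, your conclusion fails. (To be fair, the paper's own one-line necessity argument is silent on this very point.)
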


\begin{proof}[Proof of Theorem \ref{th-par-linear2}]
	\begin{enumerate}
			\item \emph{Necessity:} In order agents track a constant reference trajectory signal $y_r$, there must exists $x_0$ and $u_0$ such that 	
	\begin{equation}
	\begin{pmatrix}
	A-I&B\\C&0
	\end{pmatrix}\begin{pmatrix}
	x_0\\u_0
	\end{pmatrix}=\begin{pmatrix}
	0\\y_r
	\end{pmatrix}
	\end{equation}	
	Clearly, such $x_0$ and $u_0$ exist only if $y_r$ belongs to the set $\mathscr{Y}_r$, that proves the necessary condition.
	
%

%
%
		\item \emph{Sufficiency:} 
		For the sufficiency, we need to show that protocol \eqref{pro-lin-partial2} and \eqref{precom2} solves Problem \ref{prob1}. The proof is exactly the same as proof of Theorem \ref{th-par-linear} except for the choice of $\bar{\Pi}$ and $\tilde{x}$. In this case, we choose $\tilde{x}_i=\bar{x}_i-\bar{\Pi}z$ where $z$ is such that $y_r=Rz$. Moreover, we set
			\[
			\bar{\Pi}=\begin{pmatrix}
			\Pi\\ W
			\end{pmatrix}
			\]
			where $W$ is such that $\Gamma_1 W=\Gamma$. It is then easily seen that $\bar{A}\bar{\Pi}=\bar{\Pi}$ and $\bar{C}\bar{\Pi}=R$.
	\end{enumerate}
\end{proof}

\section{Numerical Examples}

The aim of this section is to show the scalability and effectiveness of our protocol design via numerical examples. To show the scalability, we consider three networks with different communication graphs, different number of agents. We will show that we achieve scale-free state synchronization with our one-shot designed protocol. We also illustrate that our protocol can tolerate arbitrarily large communication delays.

Consider the agents model \eqref{agent} as
\begin{equation*}\label{ex1}
	\begin{cases}
		{x}_i(k+1)=\begin{pmatrix}
			-1&0&0\\0&\frac{1}{2}&\frac{\sqrt{3}}{2}\\0&-\frac{\sqrt{3}}{2}&\frac{1}{2}&
		\end{pmatrix}x_i(k)+\begin{pmatrix}
			1&0\\0&1\\0&0
		\end{pmatrix}u_i(k),\\
		y_i(k)=\begin{pmatrix}
			1&0&1
		\end{pmatrix}x_i(k)
	\end{cases}
\end{equation*}
\begin{figure}[t]
	\includegraphics[width=9cm, height=10cm]{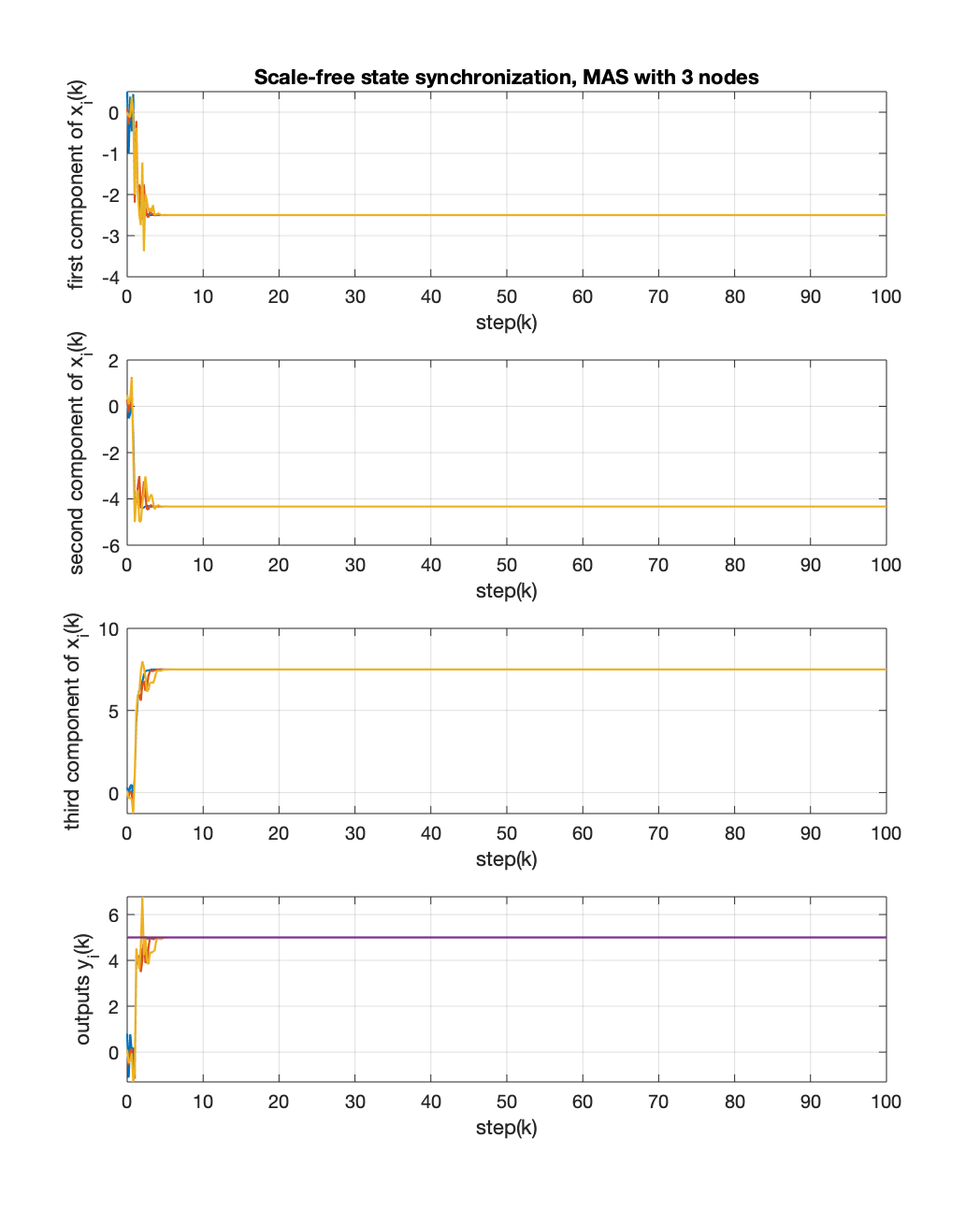}
	\centering
	\caption{State synchronization of discrete-time MAS with $N=3$ agents and unknown nonuniform communication delays}\label{3Nodes}
\end{figure}
\begin{figure}[t]
	\includegraphics[width=9cm, height=10cm]{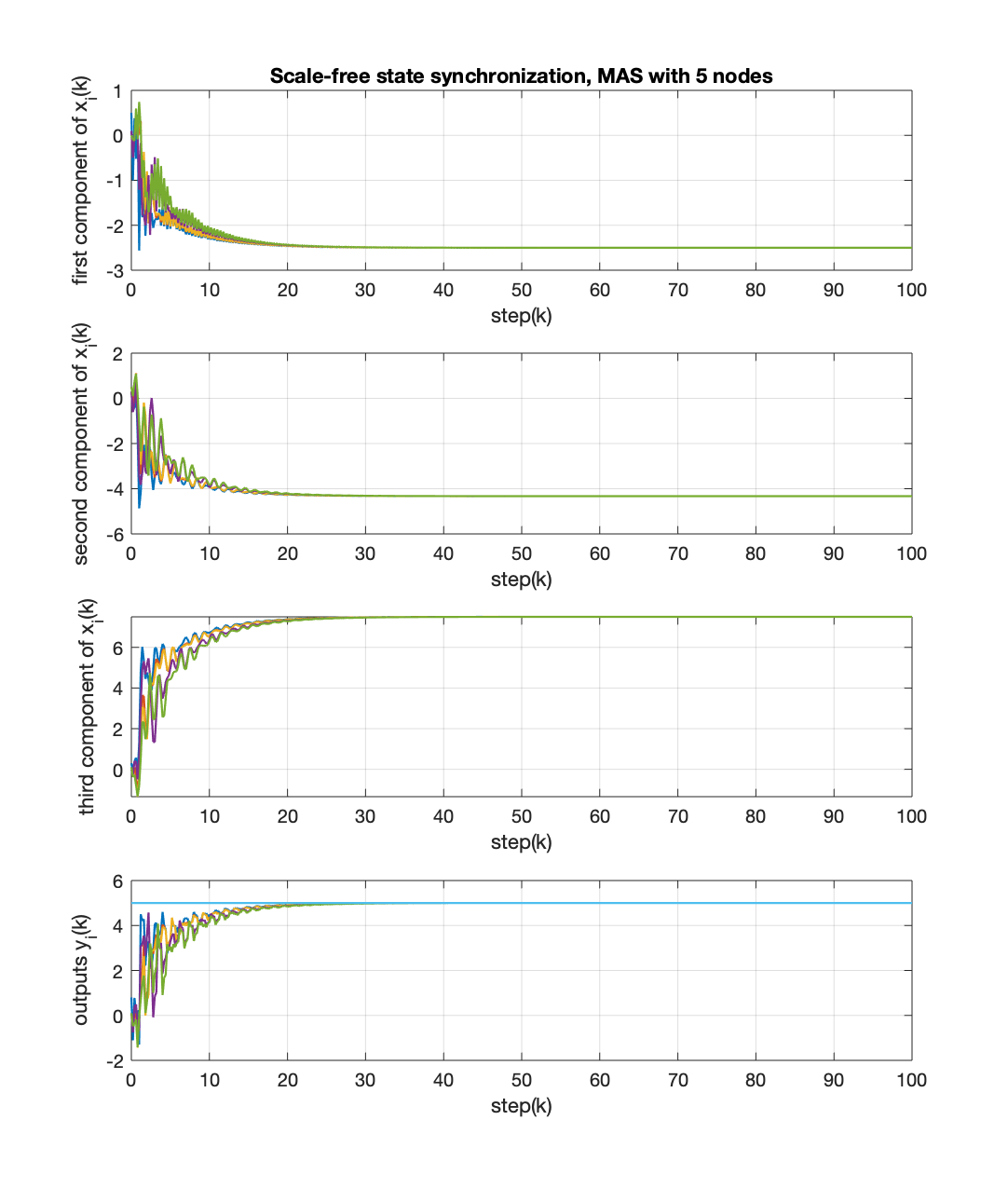}
	\centering
	\caption{State synchronization of discrete-time MAS with $N=5$ agents and unknown nonuniform communication delays}\label{5Nodes}
\end{figure}
\begin{figure}[t]
	\includegraphics[width=9cm, height=10cm]{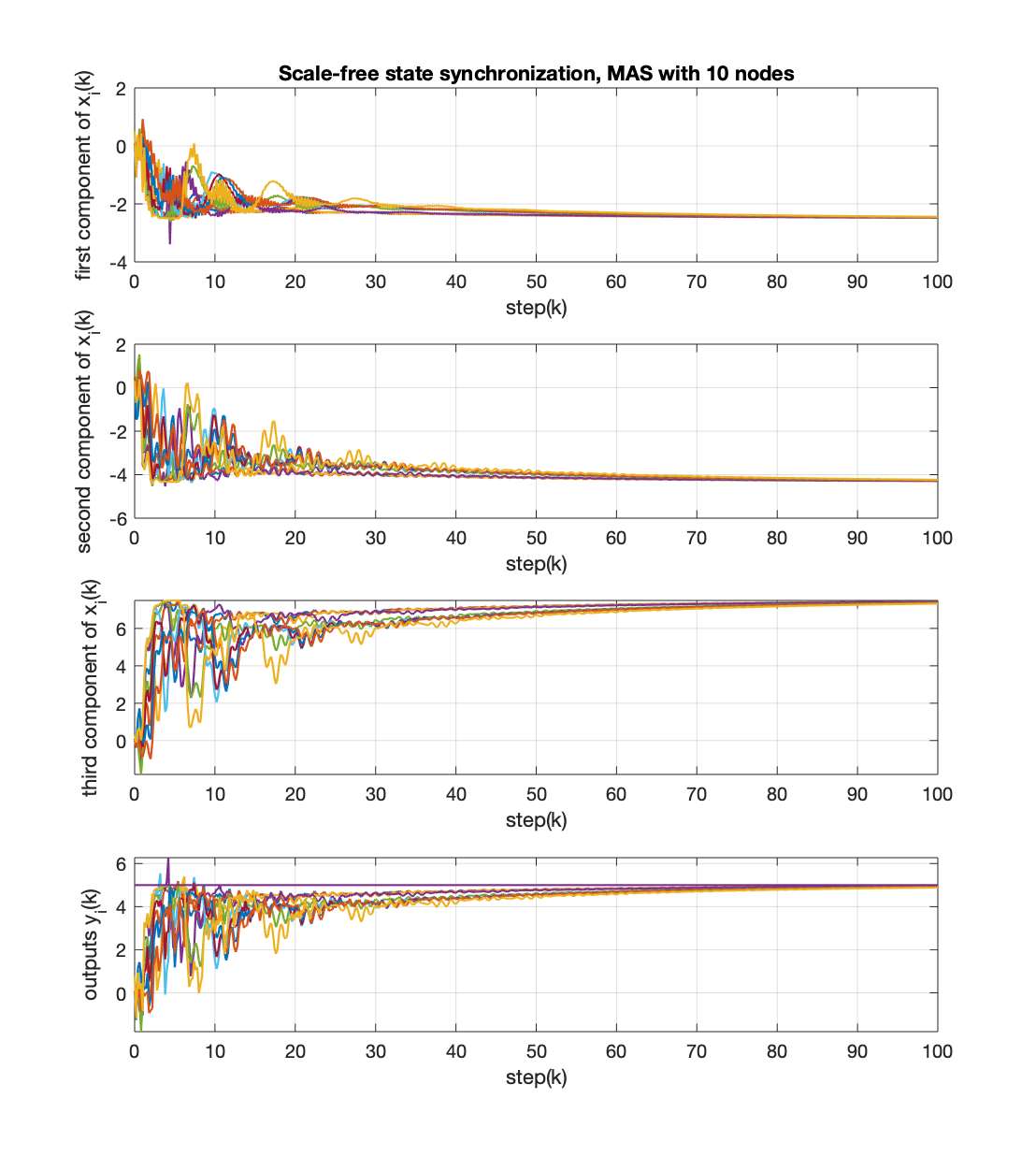}
	\centering
	\caption{State synchronization of discrete-time MAS with $N=10$ agents and unknown nonuniform communication delays}\label{10Nodes}
\end{figure}
By choosing
\[
R=1, \Pi=\begin{pmatrix}
-\frac{1}{2}&-\frac{\sqrt{3}}{2}&\frac{3}{2}
\end{pmatrix}\T, \Gamma=\Gamma_1=-\begin{pmatrix}
1\\\sqrt{3}
\end{pmatrix}, \Gamma_2=\begin{pmatrix}
0\\1
\end{pmatrix}.
\]
We design our pre-compensators as 
	\begin{equation*}
\begin{system}{cl}
p_i(k+1)&=p_i(k)+\begin{pmatrix}
0&1
\end{pmatrix}v_i(k),\\
u_i(k)&=-\begin{pmatrix}
1\\\sqrt{3}
\end{pmatrix}p_i(k)+\begin{pmatrix}
0&0\\1&0
\end{pmatrix}v_i(k)
\end{system}
\end{equation*}
We also choose matrix $K$ and $F$ as following such that $\bar{A}-\bar{B}K$ and $\bar{A}-F\bar{C}$ are Schur stable.
\[
K=\begin{pmatrix}
    0.54  &  0.87  &  0.62 &  -1.12\\
-0.89 &  -0.35  &  0.15   & 0.12
\end{pmatrix}, F=\begin{pmatrix}
   -0.45\\
-0.19\\
1.05\\
0.34
\end{pmatrix}
\]

 Then, our one-shot-designed protocol for the compensated agents would be as following:
	\begin{equation}\label{ex}
\begin{system}{cl}
{\hat{x}}_i(k+1) =& \begin{pmatrix}
-0.54 &        0 &   0.45  & -1\\
0.19   &0.5  &  1.05  & -1.73\\
-1.05  & -0.86 & -0.55  &      0\\
-0.34     &    0  & -0.34   & 1
\end{pmatrix}\hat{x}_i(k)\\
&-\begin{pmatrix}
         0    &     0    &     0     &    0\\
0.54 &   0.87 &  0.62  & -1.12\\
0     &    0     &    0     &    0\\
-0.89 &  -0.35   & 0.15 &   0.12\\
\end{pmatrix}\hat{\zeta}_i(k)+\begin{pmatrix}
   -0.45\\
-0.19\\
1.05\\
0.34
\end{pmatrix}\bar{\zeta}_i(k),\\
{\chi}_i(k+1) =& \begin{pmatrix}
   -1      &   0    &     0  & -1\\
-0.54  &-0.37  &  0.24  & -0.61\\
0  & -0.86  &  0.5    &    0\\
0.89  & 0.35 &  -0.15  &  0.87\\
\end{pmatrix}\chi_i(k)\\
&+\begin{pmatrix}
-1 &        0    &     0  & -1\\
0   & 0.5  &  0.86 &   -1.73\\
0   & -0.86 &   0.5   &     0\\
0    &     0   &      0  &  1\\
\end{pmatrix}(\hat{x}_i(k)-\hat{\zeta}_{i}(k)), \\
v_i (k)=& -\begin{pmatrix}
    0.54  &  0.87 &   0.62  & -1.12\\
-0.89  & -0.35  &  0.15  & 0.12
\end{pmatrix}\chi_i(k).
\end{system}
\end{equation}

 In all the following three examples we choose $y_r=5$.

\begin{enumerate}
	\item  Firstly, we consider a MAS with $3$ agents, $N =3$ and communication network with associated adjacency matrix $\mathcal{A}_1$, where $a_{21}=a_{32}=1$. Communication delays are chosen as $\kappa_{21}=0.5, \kappa_{21}=0.8$. The results of scale-free state synchronization are presented in Figure \ref{3Nodes}.
	\item Next, we consider a MAS with $5$ agents, $N=5$, and communication network with associated adjacency matrix $\mathcal{A}_2$, where $a_{13}=a_{21}=a_{25}=a_{32}=a_{35}=a_{43}=a_{54}=1$. Communication delays are chosen as $\kappa_{15}=0.5$, $\kappa_{43}=1$ and the rest are equal to zero. The simulation results are shown in Figure \ref{5Nodes}.
	
	\item Finally, we consider a MAS with $10$ agents and communication network with associated adjacency matrix $\mathcal{A}_3$, where $a_{21}=a_{5,10}=a_{32}=a_{43}=a_{54}=a_{65}=a_{76}=a_{87}=a_{98}=a_{10,9}=a_{15}=1$. Communication delays are chosen as $\kappa_{32}=1$, $\kappa_{43}=3, \kappa_{65}=2, \kappa_{10,4}=5$ and the rest are equal to zero. The simulation results are shown in Figure \ref{10Nodes}.
\end{enumerate}
	

\section{Conclusion}
In this paper we have proposed scale-free protocol design utilizing localized information exchange for state synchronization of homogeneous discrete-time MAS subject to unknown, nonuniform and arbitrarily large communication delays. The necessary and sufficient solvability conditions also has been provided. It should be emphasized that the proposed protocols were designed solely based on the knowledge of the agent models without any information about the communication networks such as bounds on the spectrum of the Laplacian matrix associated to the communication graph and the size of the network.

\bibliographystyle{plain}
\bibliography{referenc}

\end{document}